\newtheorem{myremark}{\bf  Remark}
\newtheorem{mydefinition}{\bf Definition}
\newtheorem{mytheorem}{\bf Theorem}
\definecolor{mygray}{gray}{.9}
\definecolor{mypink}{rgb}{.99,.91,.95}
\definecolor{mycyan}{cmyk}{.3,0,0,0}
\newcommand{\rmnum}[1]{\romannumeral #1}
\newcommand{\Rmnum}[1]{\expandafter\@slowromancap\romannumeral #1@}
\newenvironment{proof}[1][Proof]{\noindent\textbf{#1.} }{\ \rule{0.5em}{0.5em}}
\begin{document}

\baselineskip 0.7 cm
\title{ A fractional-order difference Cournot duopoly game\\ with long memory}
\author{\qquad Baogui Xin\thanks{B. Xin, College of Economics and Management, Shandong University of Science and Technology, Qingdao 266590, China, e-mail: xin@tju.edu.cn, corresponding author} \qquad\ Wei Peng\thanks{%
W. Peng, College of Economics and Management, Shandong University of Science and Technology, Qingdao 266590, China, e-mail: pengweisd@foxmail.com}\qquad\ Yekyung Kwon\thanks{%
Y. Kwon, Division of Global Business Administration, Dongseo University, Busan 47011, Korea, e-mail: yiqing@hanmail.net}}
\date{ }
\maketitle

\begin{abstract}
\baselineskip 0.7 cm We reconsider the Cournot duopoly problem in light of the theory for long memory. We introduce the Caputo fractional-order difference calculus to classical duopoly theory to propose a fractional-order discrete Cournot duopoly game model, which allows participants to make decisions while making full use of their historical information. Then we discuss Nash equilibria and local stability by using linear approximation. Finally, we detect the chaos of the model by employing a 0-1 test algorithm.

\noindent  {\bf Keywords:} Fractional-order difference; Fractional-order discrete dynamical systems; Cournot duopoly game; Bifurcation and chaos; 0-1 test.

\end{abstract}
\setcounter{tocdepth}{1}
\tableofcontents

\section{Introduction}
The purpose of this paper are twofold. One purpose is to bring together two independent lines of research in applied mathematics and industrial economics: fractional-order difference equations and Cournot equilibria. The other is to introduce long memory to classical Cournot duopoly games by replacing integer-order difference equations with fractional-order forms.

\subsection{Links between difference equation and Cournot equilibria in the literature}
Cournot game theory \cite{CournotA} was proposed in 1838. Many researchers, such as Nash \cite{Nash1951}, Von Neumann, and Morgenstern \cite{VonNeumann1944}, have since made landmark contributions to the development of game theory, which has become a powerful analytic tool in fields outside economics, such as cyberspace security \cite{HanL2019, SharehN, LalropuiaG, KhaliqS, RanjbarM }, power systems \cite{LiuH2019, WangQ }, cytobiology \cite{ArchettiK2019}, image processing \cite{Bhowmik}, human-machine systems \cite{LiangX}, artificial intelligence \cite{JiX2019, FarziY}, safety engineering \cite{GuanW, LiuQ }, nuclear security \cite{WooT}, oncology \cite{StankovaB}, system control \cite{WuY2019}, and information science \cite{MengS}. Considering the powerful dynamic characterization ability of difference equations, the discrete dynamical game with integer-order difference equations has become an important research direction in industrial economics field. Using integer-order difference equations, economics researchers have developed various interesting discrete Cournot duopoly game models, as shown in Table\ref{reftab1}.

\begin{table}[!h]
\centering \caption{Cournot duopoly games with integer-order difference equations}\label{reftab1}
\renewcommand\arraystretch{1.5}
\footnotesize\centering\begin{tabular}{ p{4.2cm}   p{1.2cm}<{\centering}  p{3cm}  p{5.2cm} }
\hline
Study & Number of players  &Decision variables &New characteristics \\
\hline
\rowcolor{mygray}Puu \cite{puuT1}  & 2 &product quantity & adaptive expectations\\
Kopel \cite{Kopel1996}, Govaerts \& Ghaziani \cite{Ghaziani2008}& 2 &product quantity &unimodal reaction functions\\
\rowcolor{mygray}Bischi \& Naimzada \cite{Bischi1999} & 2&product quantity &bounded rationality\\
Cavalli, Naimzada, \& Tramontana \cite{Cavalli2015}& 2 &product quantity&gradient-based approach, local monopolistic approximation\\
\rowcolor{mygray}Agliari, Naimzada, \& Pecora \cite{Agliari2016}& 2 &product quantity &differentiated products\\
Agiza \& Elsadany \cite{Agiza2003,Agiza2004}; Elsadany \& Awad \cite{Elsadany2016}; Fan et al. \cite{FanX2012} & 2 &product quantity & heterogeneous players\\
\rowcolor{mygray} Li, Xie, Lu, \& Tang \cite{LiX2016} & 2& callable product quantity & airline revenue management\\
El-Sayed, Elsadany, \& Awad \cite{ElsadanyE2015}& 2 &product quantity& logarithmic demand function\\
\rowcolor{mygray} Awad \& Elsadany \cite{ElsadanyA201601}& 2 &product quantity & social welfare\\
Elsadany \cite{Elsadany2017}& 2 &product quantity &bounded rationality based on relative profit maximization\\
\rowcolor{mygray}Matsumoto \cite{Matsumoto2006} & 2&product quantity & adaptive expectations\\
Fanti \& Gori \cite{Fanti2012}& 2 &product quantity & micro-founded differentiated products demand\\
\rowcolor{mygray}Xin, Ma, \& Gao \cite{XinMa2008} & 2&  product quantity & adnascent-type relationship\\
Ma \& Guo \cite{MaG2014}& 2 &product quantity& quantity estimation and two-stage consideration\\
\rowcolor{mygray}Askar, Alshamrani, \& Alnowibet \cite{AskarA2016}& 2 &product quantity & cooperation arising \\
Yu \cite{Yu2017}& 2&  car demand & heterogeneous business operation modes and differentiated productsn\\
\rowcolor{mygray}Zhang et al. \cite{ZhangZ2018}& 2 &product quantity&  two-stage, semi-collusion\\
Tramontana, Gardini, \& Puu \cite{Tramontana2009}& 2 &product quantity&competitors operate multiple production plants\\
\rowcolor{mygray}Tramontana \cite{Tramontana2010}& 2 &product quantity &isoelastic demand function\\
Tramontana et al. \cite{Tramontana2015} & 2-n &product quantity & heterogeneous players\\
\rowcolor{mygray}Baiardi \& Naimzada \cite{BaiardiN2018, BaiardiN201801, BaiardiN201802} & 2-n &product quantity & best response, imitation rules \\
\hline
\end{tabular}
\end{table}

\subsection{Links between fractional-order difference equations and scientific models in the literature}
D{\'{\i}}az and Osler \cite{Diaz1974} first presented the theory of fractional-order differences in 1974. Miller and Ross \cite{MillerR1} and Gray and Zhang \cite{Gray1988} developed fractional-order difference calculus in 1988. Fractional-order difference calculus has attracted the interest of many scholars and practitioners, but it is still a young research field compared to fractional differential calculus, which has developed for more than 300 years. Researchers of the former include Atici, Eloe, and {\c{S}}eng{\"{u}}l \cite{AticiF20070,AticiF2007,AticiE2009,AticiE200901,AticiS2010}; Bastos, Ferreira and Torres \cite{Torres2011}; Wu, Baleanu  et al. \cite{WuDH2018,WuDZ2018}; Goodrich \cite{Goodrich2016,Goodrich2018}; Abdeljawad et al. \cite{Abdeljawad2012,Abdeljawad2011,Abdeljawad2016,Abdeljawad2018}; Wei and Wang et al. \cite{Weiwang2011,Weiwang2018}; {\v{C}}erm{\'{a}}k, Gy{\H{o}}ri, and Nechv{\'{a}}tal \cite{Cermak2015}; Mozyrska and Wyrwas\cite{Mozyrska20171,Mozyrska20172}; and Abu-Saris and Al-Mdallal \cite{Abu-Saris2013}. Fractional-order difference calculus is applied in many scientific fields, as shown in Table \ref{reftab2}.

\begin{table}[!h]
\centering \caption{Non-mathematics applications of fractional-order difference calculus}\label{reftab2}
\renewcommand\arraystretch{1.5}
\footnotesize\centering\begin{tabular}{ p{3.2cm}   p{2.4cm}  p{3.2cm}  p{5cm} }
\hline
Study               & Research field     & Scientific problem  & Type of fractional-order difference calculus \\
\hline
\rowcolor{mygray} Tarasov \cite{TarasovV2015} &physics &  physical lattices & fractional-order difference equations\\
Wu et al. \cite{WuD2015}& physics & anomalous diffusion & fractional Riesz-Caputo difference equations\\
\rowcolor{mygray}Huang et al. \cite{HuangLL2016}; Ismail et al. \cite{IsmailS2017}; Liu, Xia, \& Wang \cite{LiuXW2018,LiuXW201801}; Kassim et al. \cite{Kassim2017} & cryptography & image encryption technique & fractional logistic difference equations\\
Xin et al. \cite {Xin2017}; Wu et al. \cite{WuDC2016}; Shukla \& Sharma \cite{ShuklaS2017}; Ouannas et al. \cite{OuannasA2018, OuannasA201801}; Liu \cite{LiuYC2016} & system engineering & chaos synchronization, control & fractional nonlinear difference equations\\
\rowcolor{mygray} Mozyrska  \& Pawluszewicz \cite{MozyrskaP2010,Mozyrska2014 ,MozyrskaW2015, MozyrskaP2015, Pawluszewicz2015} & system engineering & controllability, observability & fractional-order difference equations\\
Sierociuk \& Twardy \cite{Sierociuk2014}&system engineering & parameter identification & variable fractional-order difference equations\\
\rowcolor{mygray}Yin \& Zhou \cite{YinZ2015}& image processing & image denoising & difference curvature driven fractional nonlinear diffusion equations\\
Liu et al. \cite{LiuT2019}& signal processing & Kalman filter & nonlinear difference fractional system with stochastic perturbation\\
\hline
\end{tabular}
\end{table}

\subsection{Links between fractional-order difference equations and Cournot duopoly game}
The theory of integer-order difference equation is not suitable to analyze the nonlinear dynamic characteristics of the fractional discrete Cournot duopoly game with long memory. We must employ new theories to analyze the stability, bifurcation, and chaos of the fractional discrete game. Fortunately, researchers have produced some useful results for analyzing the stability of fractional-order difference equations, such as explicit stability conditions \cite{Cermak2015}, explicit criteria for stability \cite{Mozyrska20171}, stability by linear approximation \cite{Mozyrska20172, LiMa2013}, asymptotic stability criteria \cite{Abu-Saris2013}, Lyapunov functions \cite{WuLyapu2017}, finite-time stability \cite{WuFT2018}, and chaos analysis \cite{Khennaoui2019}.

The remainder of this paper is organized as follows. The relevant literature is reviewed in section 2. In section 3, a long-memory discrete Cournot duopoly game with Caputo fractional-order difference equations is presented. The Nash equilibrium points and their local stabilities are studied in section 4. In section 5, bifurcation diagrams, phase portraits, and 0-1 test algorithms are employed to validate the main results. This paper concludes with a summary in section 6.

\section{Preliminaries}
The following definitions of fractional-order difference calculus are introduced.

\begin{mydefinition} \label{def201}
(See\cite{AticiF20070}.) For any real numbers $\nu, t\in\mathbb{R} $, the $\nu$ rising fractional factorial of $t$ is defined as
\begin{equation*}
\begin{aligned}
t^{(\nu)} :=\frac{\Gamma(t+1)}{\Gamma(t+1-\nu)},\quad t^{(0)}=1.
\end{aligned}
\end{equation*}
\end{mydefinition}

\begin{mydefinition} \label{def202}
(See\cite{AticiF20070,Gray1988}.) Let $x:\mathbb{N}_a\rightarrow \mathbb{R}$, $a\in\mathbb{R}$, $t\in \mathbb{N}_{a+\nu}$, and $\nu>0$. Then the fractional sum of order $\nu$ is defined as
\begin{equation*}
\begin{aligned}
\Delta^{-\nu}_{a}x(t) :=\frac{1}{\Gamma(\nu)}\sum_{s=a}^{t-\nu}\left(t-\sigma(s)\right)^{(\nu-1)}x(s),\\
\end{aligned}
\end{equation*}
where $a$ is the start point, $\mathbb{N}_{a}=\{a,a+1,a+2,\ldots\}$ denotes the isolated time scale, and $\sigma(s)=s+1$.
\end{mydefinition}

\begin{mydefinition} \label{def203}
(See\cite{Abdeljawad2011}.) Let $\nu>0$, $\nu\notin\mathbb{N}$, $t\in \mathbb{N}_{a+n-\nu}$, and $n=[\nu]+1$. Then the $\nu$-order Caputo-like left delta difference is defined by
\begin{equation} \label{eq201}
\begin{aligned}
^{C}\Delta^{\nu}_{a}x(t) :=\Delta_{a}^{-(n-\nu)}\Delta^{n}x(t)=\frac{1}{\Gamma(n-\nu)}\sum_{s=a}^{t-(n-\nu)}\left(t-\sigma(s)\right)^{(n-\nu-1)}\Delta^{n}_{s} x(s).\\
\end{aligned}
\end{equation}
\end{mydefinition}

\begin{mytheorem} \label{theorem201}
(See\cite{Chenfl2011}.) For the Caputo fractional-order difference system
\begin{equation} \label{eq202}
\left\{\begin{aligned}
^{C}\Delta^{\nu}_{a}x(t) &=f\left(t^+, x\left(t^+\right)\right),\qquad t^+=t+\nu-1,\\
\Delta^{k} x(a)&=x_k, \qquad k=0,\ldots,m-1, \qquad m=[\nu]+1,\\
\end{aligned}\right.
\end{equation}
the equivalent discrete integral system is written as
\begin{equation} \label{eq203}
\begin{aligned}
x(t)=x_0(t)+\frac{1}{\Gamma(\nu)}\sum_{s=a+m-\nu}^{t-\nu}\left(t-\sigma(s)\right)^{(\nu-1)}f\left(s+\nu-1, x\left(s+\nu-1\right)\right),\qquad t\in \mathbb{N}_{a+m}\\
\end{aligned},
\end{equation}
where the initial iteration is
\begin{equation*}
\begin{aligned}
x_0(t)=\sum_{k=0}^{m-1}\frac{\left(t-a\right)^{(k)}}{\Gamma(k+1)}\Delta^{k}x(a).\\
\end{aligned}
\end{equation*}
\end{mytheorem}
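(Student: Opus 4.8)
The plan is to establish the equivalence by applying the fractional sum operator of order $\nu$ to both sides of the Caputo difference equation in \eqref{eq202} and then matching each side against the corresponding term in \eqref{eq203}. Since the Caputo difference $^{C}\Delta^{\nu}_{a}x(t)$ lives on the time scale $\mathbb{N}_{a+m-\nu}$ (recall $m=n=[\nu]+1$), the operator that must be applied is the fractional sum $\Delta^{-\nu}_{a+m-\nu}$ based at $a+m-\nu$; this choice of base point is what ultimately produces the lower summation limit $a+m-\nu$ in \eqref{eq203}.

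The right-hand side is immediate. Setting $g(s):=f\left(s+\nu-1,x(s+\nu-1)\right)$ and applying Definition \ref{def202} verbatim gives
\[
\Delta^{-\nu}_{a+m-\nu}g(t)=\frac{1}{\Gamma(\nu)}\sum_{s=a+m-\nu}^{t-\nu}\left(t-\sigma(s)\right)^{(\nu-1)}f\left(s+\nu-1,x(s+\nu-1)\right),
\]
which is exactly the summation term of \eqref{eq203}, so no further work is needed here.

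The heart of the argument is the left-hand side, where I must prove the fundamental identity
\[
\Delta^{-\nu}_{a+m-\nu}\,{}^{C}\Delta^{\nu}_{a}x(t)=x(t)-x_0(t),\qquad x_0(t)=\sum_{k=0}^{m-1}\frac{(t-a)^{(k)}}{\Gamma(k+1)}\Delta^{k}x(a).
\]
I would carry this out in three steps. First, expand the Caputo difference by its Definition \eqref{eq201}, $^{C}\Delta^{\nu}_{a}x=\Delta^{-(m-\nu)}_{a}\Delta^{m}x$, so the left-hand side becomes a composition of two fractional sums acting on the ordinary $m$-th difference $\Delta^{m}x$. Second, collapse this composition via the semigroup law $\Delta^{-\nu}_{a+m-\nu}\Delta^{-(m-\nu)}_{a}=\Delta^{-m}_{a}$; this is the genuinely technical step and is proved by interchanging the order of the resulting double summation and evaluating the inner sum through the discrete Beta (Chu--Vandermonde) convolution
\[
\sum_{r}\left(t-\sigma(r)\right)^{(\nu-1)}\left(r-\sigma(s)\right)^{(m-\nu-1)}=\frac{\Gamma(\nu)\Gamma(m-\nu)}{\Gamma(m)}\left(t-\sigma(s)\right)^{(m-1)}.
\]
Third, recognize $\Delta^{-m}_{a}$ as the $m$-fold iterated summation inverting $\Delta^{m}$, and apply the integer-order discrete Taylor formula $\Delta^{-m}_{a}\Delta^{m}x(t)=x(t)-\sum_{k=0}^{m-1}\frac{(t-a)^{(k)}}{\Gamma(k+1)}\Delta^{k}x(a)$, which is precisely $x(t)-x_0(t)$.

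Combining the two sides yields $x(t)=x_0(t)+\Delta^{-\nu}_{a+m-\nu}g(t)$, i.e. \eqref{eq203}, and the reverse implication follows by applying $^{C}\Delta^{\nu}_{a}$ to \eqref{eq203}: since $\Delta^{m}x_0=0$ one has $^{C}\Delta^{\nu}_{a}x_0=0$, and the complementary composition $^{C}\Delta^{\nu}_{a}\Delta^{-\nu}_{a+m-\nu}=\mathrm{id}$ returns $f$, while reading off $\Delta^{k}x(a)=x_k$ from $x_0$ confirms the initial data. The main obstacle I anticipate is the semigroup step (ii): keeping the two shifted base points $a$ and $a+m-\nu$ and the induced summation limits mutually consistent, since a mismatched base point alters the range of the inner sum and would invalidate the Beta-convolution evaluation on which the whole identity rests.
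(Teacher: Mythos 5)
The paper offers no proof of this theorem at all --- it is quoted verbatim from the cited reference \cite{Chenfl2011} --- and your argument is precisely the standard derivation given there: apply $\Delta^{-\nu}_{a+m-\nu}$ to both sides, collapse $\Delta^{-\nu}_{a+m-\nu}\Delta^{-(m-\nu)}_{a}$ to $\Delta^{-m}_{a}$ via the power-rule/semigroup identity, and invoke the discrete Taylor formula to produce $x(t)-x_0(t)$. The outline is correct (the only loose point is writing the converse as ${}^{C}\Delta^{\nu}_{a}\Delta^{-\nu}_{a+m-\nu}=\mathrm{id}$ with mismatched base points, which should be phrased as applying $\Delta^{m}$ and then $\Delta^{-(m-\nu)}_{a}$ to the sum equation), so this is essentially the same approach as the source the paper relies on.
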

Using Definition \ref{def201}, we can get
\begin{equation*}
\begin{aligned}
\left(t-\sigma(s)\right)^{(\nu-1)}=\left(t-s-1\right)^{(\nu-1)}=t^{(\nu)} :=\frac{\Gamma(t-s)}{\Gamma(t-s-\nu-1)}.
\end{aligned}
\end{equation*}
So, we can obtain the following proposition.
\begin{myremark}\label{rem201}
If\quad $a=0$, we rewrite system (\ref{eq202}) in the following numerical form
\begin{equation} \label{eq204}
\begin{aligned}
x(n)=x(0)+\frac{1}{\Gamma(\nu)}\sum_{i=1}^{n}\frac{\Gamma(n-i+\nu)}{\Gamma(n-i+1)}f\left( x(i-1)\right),\qquad n\in \mathbb{N}\\
\end{aligned}.
\end{equation}
\end{myremark}

According to the linearization theorem\cite{LiMa2013}, we obtain the following theorem, which is a special case of refs.\cite{Cermak2015,Mozyrska20172}.
\begin{mytheorem} \label{theorem202}
Assume that system (\ref{eq204}) is nonlinear with $v\in(0,1)$, $x(t)=(x_1(t),x_2(t),\cdots,x_n(t))^T$, and $f(t)=(f_1(t),f_2(t),\cdots,f_n(t))^T$ is continuously differentiable at a fixed point $x^{eq}$. Then $\forall t\in\mathbb{N}_{a+1-\nu}$, system (\ref{eq204}) is locally asymptotically stable when all eigenvalues $\lambda$ of the following Jacobian matrix
 \begin{equation*}
J(x^{eq})=\frac{\partial f(x)}{\partial x}\bigg|_{x=x^{eq}}=\left(
  \begin{array}{cccc}
    \frac{\partial f_{1}(x^{eq})}{\partial x_{1}} & \frac{\partial f_{1}(x^{eq})}{\partial x_{2}} & \cdots & \frac{\partial f_{1}(x^{eq})}{\partial x_{n}} \\
    \frac{\partial f_{2}(x^{eq})}{\partial x_{1}} & \frac{\partial f_{2}(x^{eq})}{\partial x_{2}} & \cdots & \frac{\partial f_{2}(x^{eq})}{\partial x_{n}} \\
    \vdots & \vdots &\ddots &\vdots \\
    \frac{\partial f_{n}(x^{eq})}{\partial x_{1}} & \frac{\partial f_{n}(x^{eq})}{\partial x_{2}} & \cdots & \frac{\partial f_{n}(x^{eq})}{\partial x_{n}} \\
  \end{array}\right)
\end{equation*}
are such that
\begin{equation*}
\begin{aligned}
\lambda\in\left\{z\in\mathbb{C}:|z|<\left(2\cos\frac{|\arg z|-\pi}{2-\nu}\right)^{\nu}\quad and\quad |\arg z|>\frac{\nu\pi}{2}\right\}.
\end{aligned}
\end{equation*}
\end{mytheorem}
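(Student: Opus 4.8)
The plan is to establish the claim by linearization: reduce the local asymptotic stability of the nonlinear system (\ref{eq204}) to the asymptotic stability of its variational system, and then pin down the admissible set of eigenvalues by a scalar reduction. First I would record the equilibrium condition. A constant sequence $x(n)\equiv x^{eq}$ solves (\ref{eq204}) if and only if $\frac{f(x^{eq})}{\Gamma(\nu)}\sum_{i=1}^{n}\frac{\Gamma(n-i+\nu)}{\Gamma(n-i+1)}=0$ for every $n$, and since the kernel sum is strictly positive this forces $f(x^{eq})=0$; this mirrors the fact that the Caputo difference annihilates constants. Writing $y(n)=x(n)-x^{eq}$ and using that $f$ is continuously differentiable at $x^{eq}$, the first-order Taylor expansion $f(x)=J(x^{eq})y+o(\|y\|)$ turns (\ref{eq204}) into the linear Caputo system ${}^{C}\Delta^{\nu}_{a}y(t)=J(x^{eq})\,y(t^{+})$ up to higher-order terms. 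By the linearization theorem of \cite{LiMa2013}, asymptotic stability of this linear system at the origin implies local asymptotic stability of (\ref{eq204}) at $x^{eq}$, the $C^{1}$ hypothesis being exactly what is needed to control the remainder.

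Second, I would decouple the linear system. Picking an invertible $P$ with $P^{-1}J(x^{eq})P=\Lambda$ in Jordan form and exploiting that ${}^{C}\Delta^{\nu}_{a}$ is linear and commutes with the constant matrix $P^{-1}$, the variable $w=P^{-1}y$ obeys ${}^{C}\Delta^{\nu}_{a}w(t)=\Lambda\,w(t^{+})$. Along the diagonal this is a family of scalar test equations ${}^{C}\Delta^{\nu}_{a}\zeta(t)=\lambda\,\zeta(t^{+})$, one for each eigenvalue $\lambda$ of $J(x^{eq})$, while any off-diagonal Jordan entry only couples a component to the one above it and contributes polynomial-in-$n$ factors that cannot destabilize a strictly interior $\lambda$. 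The problem therefore reduces to identifying those $\lambda\in\mathbb{C}$ for which the scalar equation decays.

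Third --- and this is the crux --- I would determine the scalar stability region by a generating-function ($\mathcal{Z}$-transform) computation. Summing the scalar version of (\ref{eq204}) against $z^{n}$ and using that $\sum_{k\geq 0}\frac{\Gamma(k+\nu)}{\Gamma(k+1)}z^{k}=\Gamma(\nu)(1-z)^{-\nu}$, one finds $Y(z)=y(0)(1-z)^{\nu-1}\big/\big((1-z)^{\nu}-\lambda z\big)$, so the characteristic equation is $(1-z)^{\nu}=\lambda z$. Asymptotic stability holds precisely when every root of this equation lies strictly outside the unit disk; the boundary of the admissible $\lambda$-set is then traced by setting $z=e^{i\theta}$, $\theta\in(0,2\pi)$, which gives $\lambda=(1-e^{i\theta})^{\nu}e^{-i\theta}$ with $|\lambda|=(2\sin\frac{\theta}{2})^{\nu}$ and $\arg\lambda=-\frac{2-\nu}{2}\theta-\frac{\nu\pi}{2}$. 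Eliminating $\theta$ via the identity $\frac{|\arg\lambda|-\pi}{2-\nu}=\frac{\theta}{2}-\frac{\pi}{2}$ reproduces exactly the stated region $\{\,|\lambda|<(2\cos\frac{|\arg\lambda|-\pi}{2-\nu})^{\nu},\ |\arg\lambda|>\frac{\nu\pi}{2}\,\}$. The main obstacle I anticipate is this last analytic step: one must verify that the branch-point singularity of $Y(z)$ at $z=1$ produces only the admissible algebraically decaying (Mittag-Leffler-type) tail rather than an instability, and that no genuine pole sits inside or on the unit circle, before the parametric boundary can be legitimately reduced to the closed form above. Since this scalar analysis is precisely the content of \cite{Cermak2015,Mozyrska20172}, in the final write-up I would quote it as a lemma and keep the original argument in the linearization and decoupling steps.
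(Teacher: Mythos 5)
The paper does not actually prove this theorem: it is stated with the one-line justification that it follows from the linearization theorem of \cite{LiMa2013} and is a special case of the linear stability results of \cite{Cermak2015,Mozyrska20172}, and no proof environment is given. Your proposal therefore does strictly more than the paper: it reconstructs the argument that those references contain. The reconstruction is sound in outline and in its key computation --- the linearization step (with $f(x^{eq})=0$ forced by positivity of the kernel weights), the Jordan decoupling to the scalar test equation, and the boundary parametrization $\lambda=(1-e^{i\theta})^{\nu}e^{-i\theta}$, whose modulus $(2\sin\tfrac{\theta}{2})^{\nu}$ and argument $-\tfrac{(2-\nu)\theta}{2}-\tfrac{\nu\pi}{2}$ do eliminate to give exactly the stated region, including the corner condition $|\arg\lambda|>\tfrac{\nu\pi}{2}$ as $\theta\to 0^{+}$. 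Since you end by proposing to quote the scalar analysis of \cite{Cermak2015,Mozyrska20172} as a lemma, your write-up would in effect contain the paper's entire ``proof'' as its final sentence, preceded by a useful expansion of what is being invoked.

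Two caveats if you intend the sketch to stand as more than a pointer to the literature. First, the step ``asymptotic stability holds precisely when every root of $(1-z)^{\nu}=\lambda z$ lies strictly outside the unit disk'' is the genuinely hard part: $Y(z)$ has a branch point at $z=1$ on the unit circle, and one must show that the branch-cut contribution decays (algebraically, Mittag--Leffler-like) and that no pole crosses into the closed disk as $\lambda$ moves inside the region --- you flag this correctly, but it is not a routine verification. Second, the claim that nontrivial Jordan blocks only contribute harmless polynomial factors is asserted, not proved; for fractional difference systems the decay of solutions is only algebraic, so ``polynomial times algebraically decaying'' requires an actual estimate rather than the analogy with the exponential case. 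Both points are settled in the cited references, so deferring to them, as both you and the authors do, is legitimate.
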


Considering a two-dimensional case of system (\ref{eq204}), we obtain the following theorem, which is a special case of Theorem \ref{theorem202}.
\begin{mytheorem} \label{theorem203}
(See\cite{Cermak2015}.) Two-dimensional system (\ref{eq204}) is locally asymptotically stable if $\det J>0$ and either
\begin{equation}\label{eq205}
\begin{aligned}
\frac{-tr J}{2}\geq \sqrt{\det J},\quad \nu>\log_2\frac{\sqrt{\varpi}-tr J}{2},
\end{aligned}
\end{equation}
or
\begin{equation}\label{eq206}
\begin{aligned}
\frac{|tr J|}{2}< \sqrt{\det J}<\left(2\cos\frac{\kappa-\pi}{2-\nu}\right)^{\nu},\quad \nu<\frac{2\kappa}{\pi},
\end{aligned}
\end{equation}
where $\varpi=\big|(tr J)^2-4\det J\big|$, and $\kappa=\frac{tr J}{\sqrt{\varpi}}$.
\end{mytheorem}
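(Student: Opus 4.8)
The plan is to specialize the eigenvalue-location condition of Theorem \ref{theorem202} to a $2\times 2$ Jacobian, where the whole spectrum is encoded by $tr J$ and $\det J$. The eigenvalues are the roots of $\lambda^2-(tr J)\lambda+\det J=0$, namely $\lambda_{1,2}=\tfrac{1}{2}\big(tr J\pm\sqrt{(tr J)^2-4\det J}\,\big)$, so the argument is driven entirely by the sign of the discriminant $(tr J)^2-4\det J$, whose modulus is $\varpi$. The standing hypothesis $\det J>0$ forces $\lambda_1\lambda_2>0$, so in the real-root regime both eigenvalues carry the sign of $tr J$ and no eigenvalue ever sits at the origin. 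First I would split into the real-root case $(tr J)^2\geq 4\det J$ and the complex-conjugate case $(tr J)^2<4\det J$, and in each case read off $|\lambda|$ and $|\arg\lambda|$ and push them through the two requirements $|\lambda|<(2\cos\frac{|\arg\lambda|-\pi}{2-\nu})^{\nu}$ and $|\arg\lambda|>\frac{\nu\pi}{2}$ supplied by Theorem \ref{theorem202}.

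In the real-root case I would argue as follows. If $tr J>0$ both roots are positive, so $\arg\lambda=0$ and the constraint $|\arg\lambda|>\nu\pi/2$ fails for every $\nu\in(0,1)$; hence stability forces $tr J<0$, which is exactly $-tr J/2\geq\sqrt{\det J}$. With $tr J<0$ both roots are negative, $\arg\lambda=\pi$, the angular constraint holds automatically, and the boundary curve collapses to $(2\cos 0)^{\nu}=2^{\nu}$. The binding eigenvalue is the one of larger modulus, $|\lambda_2|=\tfrac{1}{2}(\sqrt{\varpi}-tr J)$, so the modulus constraint becomes $\tfrac{1}{2}(\sqrt{\varpi}-tr J)<2^{\nu}$; taking $\log_2$ yields $\nu>\log_2\frac{\sqrt{\varpi}-tr J}{2}$, which is precisely (\ref{eq205}).

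In the complex-conjugate case, $(tr J)^2<4\det J$ is the same as $|tr J|/2<\sqrt{\det J}$, and a direct computation gives $|\lambda|=\sqrt{\det J}$ together with a common value $\kappa:=|\arg\lambda|$ of the argument, determined by $\cos\kappa=\frac{tr J}{2\sqrt{\det J}}$, equivalently $\cot\kappa=\frac{tr J}{\sqrt{\varpi}}$. Substituting $|\arg\lambda|=\kappa$ and $|\lambda|=\sqrt{\det J}$ into the two requirements of Theorem \ref{theorem202} turns the angular condition into $\kappa>\nu\pi/2$, i.e.\ $\nu<2\kappa/\pi$, and the modulus condition into $\sqrt{\det J}<\big(2\cos\frac{\kappa-\pi}{2-\nu}\big)^{\nu}$; together these are exactly (\ref{eq206}). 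Combining the two regimes, together with the exclusion of the unstable configuration $tr J>0$ in the real case, yields the stated either/or characterization.

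I expect the complex-conjugate case to be the main obstacle, on two counts. First, one must pin down the correct branch and sign of $\arg\lambda$ so that $\kappa$ genuinely equals $|\arg\lambda|$ and the substitution into the curved boundary $(2\cos\frac{\,\cdot\,-\pi}{2-\nu})^{\nu}$ is legitimate; reconciling the compact expression $\kappa=tr J/\sqrt{\varpi}$ (which is really $\cot|\arg\lambda|$) with the angle that actually enters that boundary is the subtlest bookkeeping step. Second, in the real-root case one must verify that the larger-modulus root is the binding one and that the boundary function is evaluated at the right place rather than merely at a convenient point. The remaining work, namely the quadratic formula, the identity $\sqrt{\varpi}=2\sqrt{\det J}\sin\kappa$, and the rearrangement into logarithmic form, is routine.
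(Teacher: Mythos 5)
Your proposal is correct and follows exactly the route the paper intends: the paper offers no written proof of this theorem, presenting it only as a cited special case of Theorem \ref{theorem202} (see \cite{Cermak2015}), and your case split on the sign of $(tr J)^2-4\det J$, with the larger-modulus negative root binding in the real case and $|\lambda|=\sqrt{\det J}$, $|\arg\lambda|=\kappa$ in the complex case, is precisely that specialization carried out in detail. Your observation that the displayed formula $\kappa=\frac{tr J}{\sqrt{\varpi}}$ can only be read as $\kappa=\mathrm{arccot}\frac{tr J}{\sqrt{\varpi}}$ (so that $\kappa$ is genuinely the eigenvalue argument entering $\cos\frac{\kappa-\pi}{2-\nu}$ and $\nu<\frac{2\kappa}{\pi}$) is also correct and is needed for the statement to be consistent with the source.
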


In the following discussion, the time-dependence, subscripts, and superscripts will be omitted from notation if no confusion is caused.
\section{Model}

Let us consider a classical Cournot competition between firms 1 and 2, who produce homogeneous products which are perfect substitutes. Let $q_i(t), i=1,2$ denote the $i-$th firm's output during discrete time periods $t=0,1,2,\cdots$.  For simplicity, as mentioned by Bischi \& Naimzada \cite{Bischi1999} and  Agiza \& Elsadany \cite{Agiza2003,Agiza2004}, we assume the market-clearing price $p(t)$ at period $t$, an inverse demand function, is linear and decreasing, as follows:
\begin{equation}\label{ch301}
\begin{aligned}
p(t)=b-d(q_{i}(t)+q_{j}(t)),\qquad i,j=1,2,\quad i\neq j,
\end{aligned}
\end{equation}
where $b$ and $d$ are positive constants.

The  cost function takes the following form:
\begin{equation}\label{ch302}
\begin{aligned}
C_i(t)=\frac{1}{2}c_iq_i^2(t),\qquad i=1,2,
\end{aligned}
\end{equation}
where $c_i$, $i=1,2$, is a positive constant.

The $i-$th firm's profit can be written as
\begin{equation}\label{ch303}
\begin{aligned}
\Pi_i(q_{i}(t),q_{j}(t))=p(t)q_{i}(t)-C_i(t),\qquad i,j=1,2,\quad i\neq j.
\end{aligned}
\end{equation}

Then the $i-$th firm's marginal profit can be obtained by differentiating with respect to $q_i$:

\begin{equation}\label{ch304}
\begin{aligned}
\Phi_i(t)=\frac{\partial\Pi_i(q_{i}(t),q_{j}(t))}{\partial q_{i}(t)}=b-(c_i + 2d)q_{i}(t)-dq_j(t) ,\qquad i,j=1,2,\quad i\neq j.
\end{aligned}
\end{equation}

Now, let us consider the following repeated game mechanism. These two bounded rational firms have long memories of output decisions. Firms try to employ more complex adjustment mechanisms, which can be called ``bounded rationality with long memory." The dynamical adjustment mechanism is based on the long memory effect and the local estimation of the marginal profit, which can be described by
\begin{equation} \label{eq306}
\begin{aligned}
\Delta^{\nu}_{a}q_{i}(t) &=\alpha_{i}q_{i}(t^+)\Phi_i(t^+) ,\qquad i=1,2,
\end{aligned}
\end{equation}
where $\alpha_{i}>0$ is the speed of adjustment, and $^{C}\Delta^{\nu}_{a}$ is the $\nu$-order left Caputo-like delta difference. For the function $q(t)$, the  delta difference operator $\Delta$ is defined by $\Delta q(t)=q(t+1)-q(t)$.

\begin{myremark} \label{rem301}
When $\nu=1$ and \ $\Delta^{\nu}_{a}q_{i}(t)=q_{i}(t+1)-q_{i}(t)$, then equation (\ref{eq306}) degenerates to
\begin{equation} \label{eq307}
\begin{aligned}
q_{i}(t+1)=q_{i}(t)+\alpha_{i}q_{i}(t)\Phi_i(t) ,\qquad i=1,2,
\end{aligned}
\end{equation}
which agrees with the game model without long-memory effects proposed by Bischi \& Naimzada \cite{Bischi1999} and Agiza \& Elsadany \cite{Agiza2003,Agiza2004}.
\end{myremark}
From the above assumptions, we obtain the following discrete fractional Cournot duopoly game:
\begin{equation} \label{eq308}
\left\{\begin{aligned}
^{C}\Delta^{\nu}_{a}q_{1}(t) &=\alpha_{1}q_{1}(t^+)\left(b-(c_1 + 2d)q_{1}(t^+)-dq_2(t^+)\right) ,\\
^{C}\Delta^{\nu}_{a}q_{2}(t) &=\alpha_{2}q_{2}(t^+)\left(b-(c_2 + 2d)q_{2}(t^+)-dq_1(t^+)\right) .\\
\end{aligned}\right.
\end{equation}

\section{Nash equilibrium and local stability}

The equilibrium points of system (\ref{eq308}) satisfy
\begin{equation}\label{eq401}
\left\{ \begin{aligned}
&\alpha_{1}q_{1}\left(b-(c_1 + 2d)q_{1}-dq_2\right)=0,\\
&\alpha_{2}q_{2}\left(b-(c_2 + 2d)q_{2}-dq_1\right)=0.\\
\end{aligned} \right.
\end{equation}
By simple algebraic computation, we obtain four Nash equilibrium points:
$E_{1}=\left(0,0\right)$, $E_{2}=\left(0,\frac{b}{c_2+2d}\right)$, $E_{3}=\left(\frac{b}{c_1+2d},0\right)$, and
$E_{4}=(m(c_2+d),m(c_1+d))$, where $m=\frac{b}{c_1c_2+2c_1d+2c_2d+3d^2}$. They contain the following economic information:

 (\rmnum{1}) At the equilibrium point $E_{1}=(0,0)$, no single firm has anything to gain by producing its products if its opponent keeps producing nothing.

 (\rmnum{2}) At the equilibrium point $E_{2}=\left(0,\frac{b}{c_2+2d}\right)$, the best strategy of firm 1 is to produce nothing if firm 2 adopts its equilibrium production strategy $q_{2}^{*}=\frac{b}{c_2+2d}$. Similarly, the best output of firm 2 is $q_{2}^{*}=\frac{b}{c_2+2d}$ if firm 1 keeps producing nothing as its equilibrium strategy.

 (\rmnum{3}) At the equilibrium point $E_{3}=\left(\frac{b}{c_1+2d},0\right)$, firm 2 cannot obtain any payoff by producing its products if firm 1 adopts its equilibrium ouput strategy $q_{1}^{*}=\frac{b}{c_2+2d}$. In the same way, firm 1 will maintain its equilibrium output strategy $q_{1}^{*}=\frac{b}{c_2+2d}$ if firm 2 keeps zero output as its equilibrium strategy.

 (\rmnum{4}) At the equilibrium point $E_{4}=(m(c_2+d),m(c_1+d))$, firm 1 can receive no incremental benefit from deviating unilaterally from its chosen strategy $q_{1}=m(c_2+d)$, assuming firm 2 maintains its production strategy $q_2=m(c_1+d)$, and vice versa.

Obviously, equilibrium points $E_{1}$, $E_{2}$, and $E_{3}$ are bounded equilibria\cite{Bischi1999}. So, we only analyze the stability of non-bounded equilibrium point $E_{4}$, as follows.

\begin{mytheorem}\label{theorem401}
System (\ref{eq308}) is locally asymptotically stable at Nash equilibrium point $E_{4}$ if
\begin{equation}\label{eq402}
\begin{aligned}
\nu>\log_2\frac{\sqrt{(tr J)^2-4\det J}-tr J}{2}
\end{aligned},
\end{equation}
where
\begin{equation}\label{trJ}
\begin{aligned}
&tr J=-\frac{b(2d^2(\alpha_1+\alpha_2)+d(c_1(\alpha_1+2\alpha_2)+c_2(2\alpha_1+\alpha_2))+c_1c_2(\alpha_1+\alpha_2))}{3d^2+2d(c_1+c_2)+c_1c_2},
\end{aligned}
\end{equation}
\begin{equation}\label{detJ}
\begin{aligned}
&\det J=\frac{\alpha_1\alpha_2b^2(c_1+d)(c_2+d)}{c_1c_2+2d(c_1+c_2)+3d^2}.
\end{aligned}
\end{equation}
\end{mytheorem}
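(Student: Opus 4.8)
The plan is to apply Theorem~\ref{theorem203} directly to the two-dimensional system (\ref{eq308}), so that the entire argument reduces to computing the Jacobian $J$ at $E_4$ and checking that its trace and determinant fall into the regime governed by condition (\ref{eq205}). Writing $f_i(q_1,q_2)=\alpha_i q_i\big(b-(c_i+2d)q_i-dq_j\big)$ for $i\neq j$, I would first differentiate to get the diagonal entry $\partial f_i/\partial q_i=\alpha_i\big(b-2(c_i+2d)q_i-dq_j\big)$ and the off-diagonal entry $\partial f_i/\partial q_j=-\alpha_i d\,q_i$. Evaluating at $E_4$, the crucial simplification is that the equilibrium relations $b-(c_i+2d)q_i^{*}-dq_j^{*}=0$ cancel one copy of the bracket in each diagonal term, leaving
\begin{equation*}
J(E_4)=\begin{pmatrix} -\alpha_1(c_1+2d)q_1^{*} & -\alpha_1 d\,q_1^{*}\\[2pt] -\alpha_2 d\,q_2^{*} & -\alpha_2(c_2+2d)q_2^{*}\end{pmatrix},
\end{equation*}
where $q_1^{*}=m(c_2+d)$ and $q_2^{*}=m(c_1+d)$.

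Next I would read off $tr J=-\alpha_1(c_1+2d)q_1^{*}-\alpha_2(c_2+2d)q_2^{*}$ and $\det J=\alpha_1\alpha_2 q_1^{*}q_2^{*}\big[(c_1+2d)(c_2+2d)-d^2\big]$. Substituting $q_1^{*},q_2^{*}$ and $m=b/(c_1c_2+2d(c_1+c_2)+3d^2)$, and using the identity $(c_1+2d)(c_2+2d)-d^2=c_1c_2+2d(c_1+c_2)+3d^2$, the common denominator collapses and one recovers exactly the closed forms (\ref{trJ}) and (\ref{detJ}). Since $b,d,\alpha_i,c_i>0$, both expressions have unambiguous sign: $\det J>0$ and $tr J<0$.

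The key step is to verify that we land in branch (\ref{eq205}) rather than (\ref{eq206}), i.e.\ that $-tr J/2\geq\sqrt{\det J}$, equivalently $(tr J)^2\geq 4\det J$ together with $tr J\leq 0$. Here I would invoke the algebraic identity $(tr J)^2-4\det J=(a_{11}-a_{22})^2+4a_{12}a_{21}$ for the entries $a_{ij}$ of $J(E_4)$. Because both off-diagonal entries are negative, their product $a_{12}a_{21}=\alpha_1\alpha_2 d^2 q_1^{*}q_2^{*}$ is strictly positive, so the discriminant is a square plus a positive quantity and is therefore strictly positive; the eigenvalues of $J$ are real, and combined with $tr J<0$ this gives $-tr J/2\geq\sqrt{\det J}$. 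With the first inequality of (\ref{eq205}) secured, its second inequality reads $\nu>\log_2\big(\sqrt{\varpi}-tr J\big)/2$ with $\varpi=(tr J)^2-4\det J>0$, which is precisely the stated condition (\ref{eq402}); Theorem~\ref{theorem203} then yields local asymptotic stability at $E_4$.

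The only genuinely delicate point is the sign analysis in the last paragraph: one must confirm that the trace is negative and the discriminant positive so that condition (\ref{eq205})—and not (\ref{eq206})—applies. Fortunately the structure of the Jacobian, with negative diagonal and negative off-diagonal entries, makes both facts transparent, so I anticipate no real obstacle; the remaining effort is the routine but somewhat tedious algebra of reducing $tr J$ and $\det J$ to the compact forms (\ref{trJ})--(\ref{detJ}).
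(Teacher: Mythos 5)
Your proposal is correct and follows essentially the same route as the paper: compute the Jacobian at $E_4$, reduce its trace and determinant to the closed forms (\ref{trJ})--(\ref{detJ}), check $\det J>0$, $tr J<0$, $(tr J)^2-4\det J>0$, and $-tr J/2\geq\sqrt{\det J}$, and then invoke the first branch (\ref{eq205}) of the two-dimensional stability criterion to obtain condition (\ref{eq402}). Your identity $(tr J)^2-4\det J=(a_{11}-a_{22})^2+4a_{12}a_{21}$ with $a_{12}a_{21}>0$ actually makes the discriminant's positivity more transparent than the paper's unelaborated ``we algebraically obtain,'' so no gap remains.
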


\begin{proof}
The Jacobian matrix $J$ of system (\ref{eq308}) evaluated at Nash equilibrium points $E_{4}$ is
\begin{equation*}
J(E_{4})=\left(
  \begin{array}{cc}
    \alpha_1(b-m(2c_1c_2+5d^2+d(3c_1+4c_2))) & -\alpha_1 md(c_2+d)\\
    -\alpha_2 md(c_1+d) & \alpha_2(b-m(2c_1c_2+5d^2+d(4c_1+3c_2)))
  \end{array}\right),
\end{equation*}
whose characteristic equation is
\begin{equation*}
p(\lambda)=\lambda^{2}-tr J\lambda+\det J,
\end{equation*}
where $tr J$ and $\det J$ are the same as (\ref{trJ}) and (\ref{detJ}), respectively.

We algebraically obtain
$\det J>0$, $tr J<0$, $(tr J)^2-4\det J>0$, and $\frac{-tr J}{2}\geq\sqrt{\det J}$. So, we can use Theorem \ref{theorem202} to find that system (\ref{eq308}) is locally asymptotically stable at Nash equilibrium point $E_{4}$ if eq. (\ref{eq402}) holds.

This completes the proof.
\end{proof}

\section{Numerical simulation}
According to Remark \ref{rem201}, system (\ref{eq501}) can be rewritten as
\begin{equation} \label{eq501}
\left\{\begin{aligned}
q_{1}(n) &=q_1(0)+\frac{1}{\Gamma(\nu)}\sum_{i=1}^{n}\frac{\Gamma(n-i+\nu)}{\Gamma(n-i+1)}\alpha_{1}q_{1}(i-1)\left(b-(c_1 + 2d)q_{1}(i-1)-dq_2(i-1)\right),\\
q_{2}(n) &=q_2(0)+\frac{1}{\Gamma(\nu)}\sum_{i=1}^{n}\frac{\Gamma(n-i+\nu)}{\Gamma(n-i+1)}\alpha_{2}q_{2}(i-1)\left(b-(c_2 + 2d)q_{2}(i-1)-dq_1(i-1)\right).\\
\end{aligned}\right.
\end{equation}
We will use eqs. (\ref{eq501}) to demonstrate stability, bifurcation, and chaos of system (\ref{eq308}).

Let parameters $\nu=0.99$, $\alpha1=0.45$, $\alpha2=0.12$, $b=6$, $d=4.1$, $c1=0.2$, and $c2=0.3$. Then non-bounded Nash equilibrium point $E_{4}=(0.4836, q2 = 0.4726)$, $tr J=-2.3101<0$, and $\det J=0.67378>0$. According to Theorem \ref{theorem401}, we can get $(tr J)^2-4\det J=2.6415>0$, $\frac{-tr J}{2}-\sqrt{\det J}=2.0079>0$, and $\nu=0.99>\log_2\frac{\sqrt{(tr J)^2-4\det J}-tr J}{2}=0.9765$, which is to say system (\ref{eq308}) is local asymptotically stable at non-bounded Nash equilibrium point $E_{4}$, as shown in Figure \ref{ffg501}.

\begin{figure}
\begin{center}
\begin{minipage}{140mm}
\subfigure[Firm 1.]{
\resizebox{7cm}{!}{\includegraphics{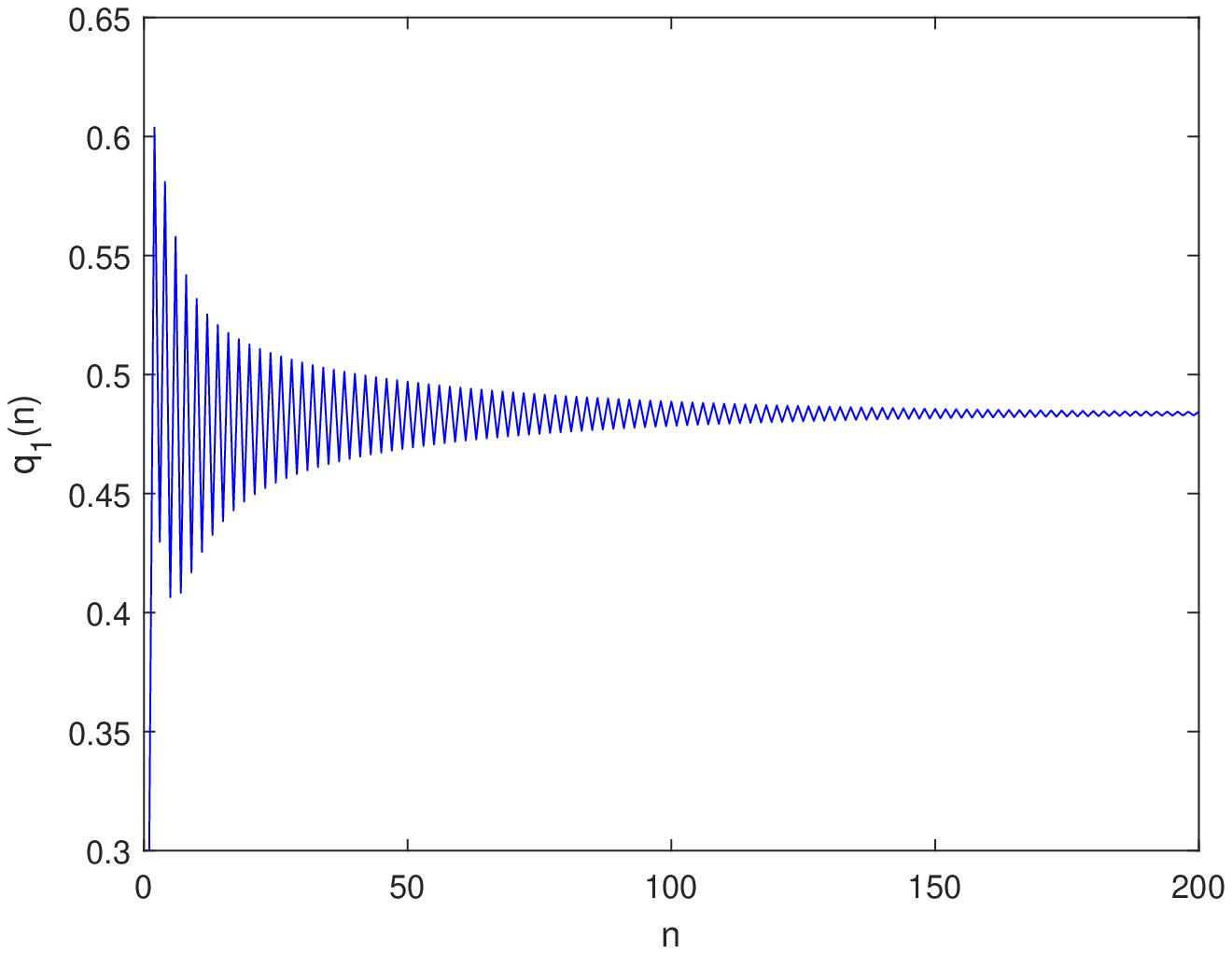}}}%
\subfigure[Firm 2.]{
\resizebox{7cm}{!}{\includegraphics{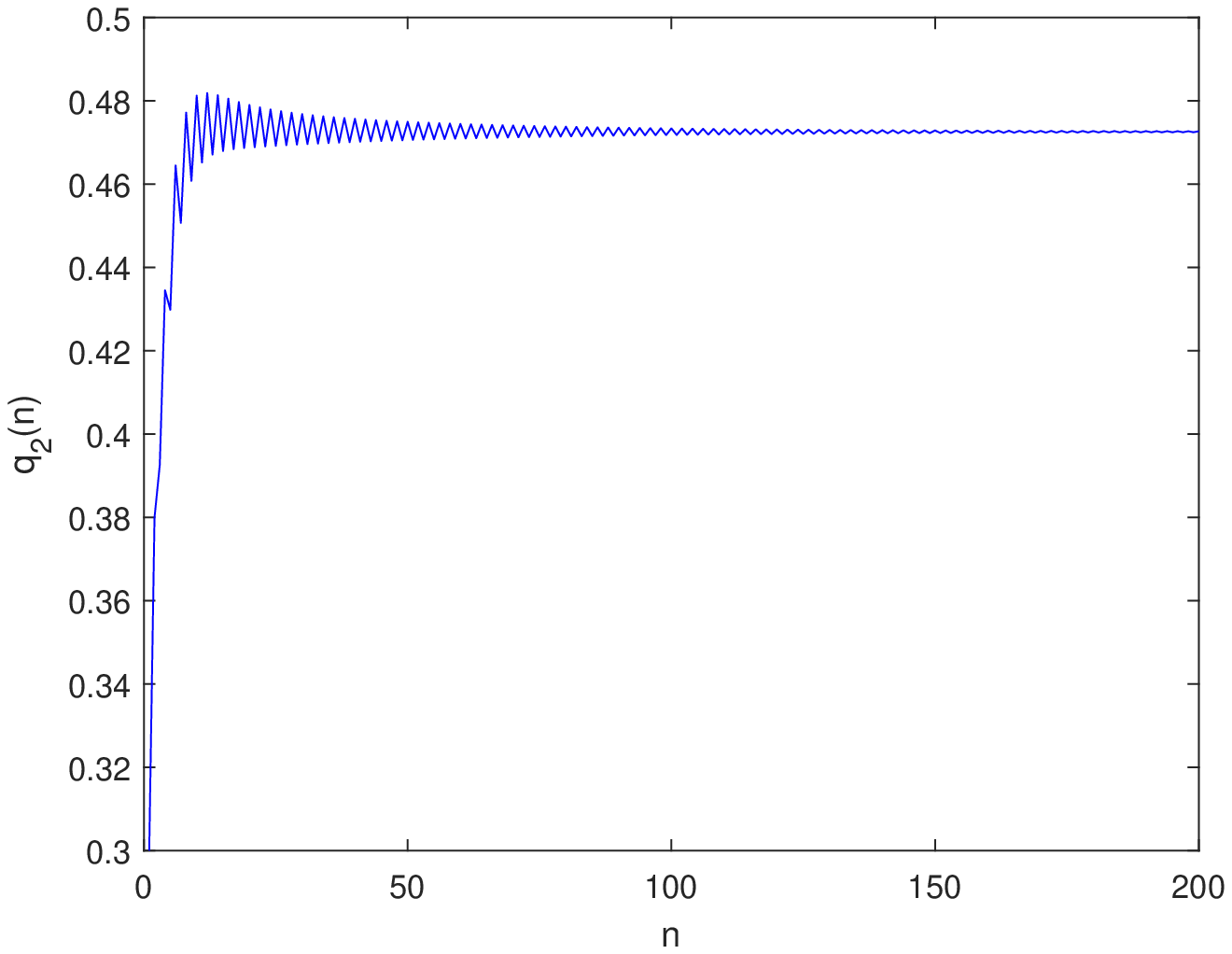}}}
\caption{Stabilized states of duopoly outputs of system (\ref{eq501}) with $(q_1(0),q_2(0))=(0.3,0.3)$.}
\label{ffg501}
\end{minipage}
\end{center}
\end{figure}

To analyze bifurcation and chaos of the long-memory system (\ref{eq308}) with fractional-order $\nu$, we vary the long-memory parameter, fractional-order $\nu\in(0,1)$, and fix the remaining parameters as $\alpha1=0.45$, $\alpha2=0.12$, $b=6$, $d=4.1$, $c1=0.2$, and $c2=0.3$, with the starting point $(q_1(0),q_2(0))=(0.3,0.3)$, as shown in Figure \ref{ffg502}. This figure includes the bifurcation diagram of firm 1's output and the scatter diagram of $K$, which is the  median value of the correlation coefficient of $q_1$. To obtain Figure \ref{ffg502}, the long-memory parameter $\nu$ varies from $0$ to $1$ with an increment of 0.002. The bifurcation diagram of $q_1$ is drawn by using 100 data points after dropping 500 transient data points. $K$ is calculated by taking 3,000 data points after dropping 500 transient data points. The bifurcation diagram demonstrates the possible long-term values of firm 1's output with long-memory parameter $\nu$ varying from $0$ to $1$. The scatter diagram of $\ K$ illustrates the possibility of chaos occurrence in system (\ref{eq308}) corresponding to different values of $\nu\in(0,1)$.

The criterion of the 0-1 chaos test\cite{Gottwald2009,Gottwald2008,Gottwald2007,BazineM2019,ZhuQi2018,RanLi2018,Yuan2019,Tiwari2019,Martinovic2018,Halfar2018,Vaidyanathan2018}
 shows that bounded trajectories in the $(p,s)-$plane or $K\approx0$ mean system (\ref{eq308}) is regular, and Brownian-like trajectories in the $(p,s)-$plane or $K\approx1$ indicate system (\ref{eq308}) is chaotic. So, system (\ref{eq308}) is chaotic when $\nu<0.4$, i.e., when $K\approx1$.  Obviously, system (\ref{eq308}) shows that the bifurcation diagram quite well coincides with the scatter diagram of $K$.

\begin{figure}
\centerline { \epsfig{figure=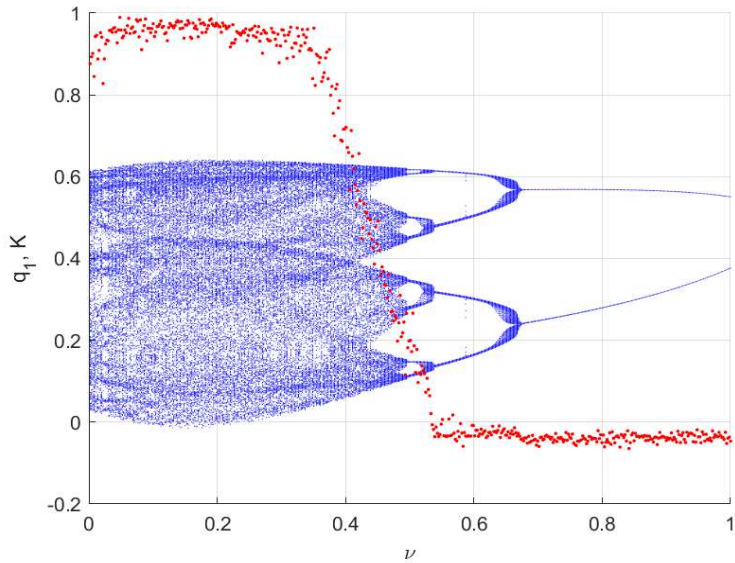, height=80mm, width=110mm}}
\caption{{Firm 1's output bifurcation (blue) and K (red) of system (\ref{eq501}) with varying $\nu\in(0,1)$.}}
\label{ffg502}
\end{figure}

For Figure \ref{ffg502}, let $\nu=0.2$, which leads to $K=0.9752$. We can know that system (\ref{eq308}) is chaotic, as shown in Figure \ref{ffg503}. To produce Figure \ref{ffg503}, we iterate it 3,500 times and draw a duopoly output time series using the first 300 data points, as shown in Figure \ref{ffg503}(a)-(b). We also create its phase portrait (as shown in Figure \ref{ffg503}(c)) and calculate its median correlation coefficient using the last 3,000 data points after dropping 500 transient data points, and draw its trajectories in new coordinates $(p, s)$, as shown in Figure \ref{ffg503}(d). From Figure \ref{ffg503}, we can find that their time series are irregular and chaotic, their output phase portrait is a strange chaotic attractor, and their trajectories in the $(p,s)-$plane are Brownian-like.
\begin{figure}
\begin{center}
\begin{minipage}{140mm}
\subfigure[Time series of firm 1's output.]{
\resizebox{7cm}{!}{\includegraphics{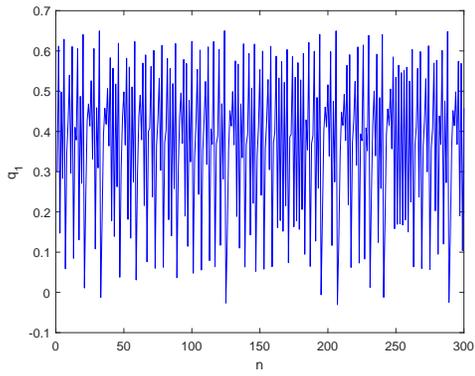}}}
\subfigure[Time series of firm 2's output.]{
\resizebox{7cm}{!}{\includegraphics{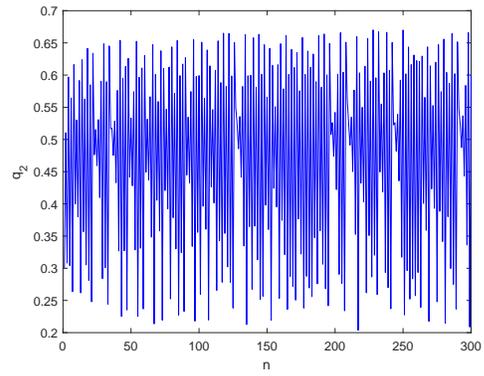}}}\\
\subfigure[Phase portrait of duopoly output.]{
\resizebox{7cm}{!}{\includegraphics{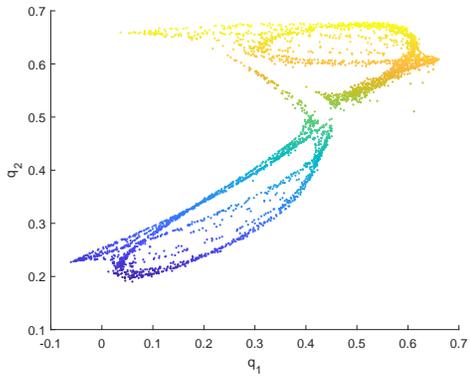}}}
\subfigure[Dynamics of the translation components $(p, s)$.]{
\resizebox{7cm}{!}{\includegraphics{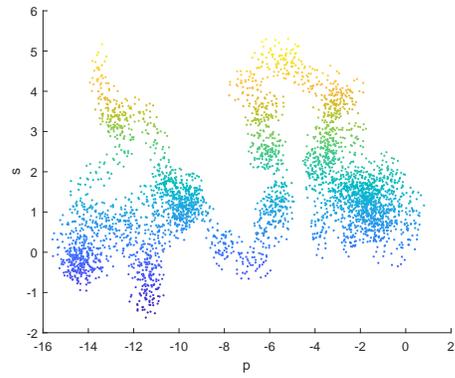}}}
\caption{Chaos in system (\ref{eq501}) with $(q_1(0),q_2(0))=(0.1,0.3)$.}
\label{ffg503}
\end{minipage}
\end{center}
\end{figure}

\section{Conclusion}

We have proposed a nonlinear fractional-order discrete Cournot duopoly game model that shows the long-memory effect. We discussed its Nash equilibria and local stability by linear approximation, then numerically illustrated its phase portraits, bifurcation diagrams, and chaos attractors using the 0-1 test algorithm. The paradigm to analyze the Cournot duopoly game can be applied to other scientific fields. There are still some open issues regarding fractional-order difference calculus, such as qualitative analysis theories of bifurcation, and high efficient Lyapunov algorithms.

\section*{Acknowledgments}
This work is supported by Natural Science Foundation of Shandong Province ( Grant No. ZR2016FM26) and  National Social Science Foundation of China ( Grant No.16FJY008).


\begin{thebibliography}{999}

\bibitem{CournotA}A. Cournot, Recherches sur les principes math{\'{e}}matiques de la th{\'{e}}orie des richesses par Augustin Cournot. chez L. Hachette; 1838.
\bibitem{Nash1951} J. Nash. Non-Cooperative Games, Ann. Math. 54(2)(1951)286-295.
\bibitem{VonNeumann1944}J. Von Neumann, O. Morgenstern, The Theory of Games and Economic Behavior, Princeton Univ. Press, 1944.
\bibitem{HanL2019}L.Han, et al., Intrusion detection model of wireless sensor networks based on game theory and an autoregressive model. Inform. Sci. 476(2019) 491-504.
\bibitem{SharehN}M. Shareh, H. Navidi, H. Javadi, M. HosseinZadeh, Preventing Sybil attacks in P2P file sharing networks based on the evolutionary game model. Inform. Sci. 470(2019)94-108.
\bibitem{LalropuiaG}K. Lalropuia, V. Gupta, Modeling cyber-physical attacks based on stochastic game and Markov processes. Reliab. Eng. Sys. Safe. 181(2019)28-37.
\bibitem{KhaliqS}S. Khaliq, et al., Defence against PUE attacks in ad hoc cognitive radio networks: a mean field game approach, Telecom. Syst. 70(1) (2019)123-140.
\bibitem{RanjbarM}M. Ranjbar, M.  Kheradmandi, A. Pirayesh, Game framework for optimal allocation of spinning reserve to confront intelligent physical attacks on power system. IET Gener. Transm. Dis. 13(1)(2019)92-98.
\bibitem{LiuH2019}H. Liu, SINR-based multi-channel power schedule under DoS attacks: A Stackelberg game approach with incomplete information. Automatica, 100 (2019)274-280.
\bibitem{WangQ}Q. Wang, et al. A two-layer game theoretical attack-defense model for a false data injection attack against power systems. Int. J. Elect. Power Energy Sys. 104(2019)169-177.

\bibitem{ArchettiK2019}M. Archetti, K. Pienta, Cooperation among cancer cells: applying game theory to cancer. Nat. Rev. Cancer, 19(2019)115.
\bibitem{Bhowmik}S. Bhowmik, R. Sarkar, B. Das, D. Doermann, GiB: A game theory inspired binarization technique for degraded document images. IEEE T. Image Process. 28(3) (2019)1443-1455.
\bibitem{LiangX}X. Liang, Z. Yan, A survey on game theoretical methods in Human–Machine Networks. Future Gener. Comp. Sys. 92(2019) 674-693.
\bibitem{JiX2019}X. Ji, et al., Shared Steering Torque Control for Lane Change Assistance: A Stochastic Game-Theoretic Approach. IEEE T. Ind. Electron. 66(4) (2019)3093-3105.
\bibitem{FarziY}S. Farzi, S. Yousefi, J. Bagherzadeh, B. Eslamnour, Zone-based load balancing in two-tier heterogeneous cellular networks: a game theoretic approach. Telecom. Syst. 70(1) (2019)105-121.
\bibitem{GuanW}J. Guan, K. Wang, Towards pedestrian room evacuation with a spatial game. Appl. Math. Comput. 347(2019)492-501.
\bibitem{LiuQ}Q. Liu, X. Li, X. Meng, Effectiveness research on the multi-player evolutionary game of coal-mine safety regulation in China based on system dynamics. Safety Sci. 111 (2019)224-233.
\bibitem{WooT}T. Woo, Game theory based complex analysis for nuclear security using non-zero sum algorithm. Ann. Nuc. Energy, 125 (2019)12-17.
\bibitem{StankovaB}K. Sta{\v{n}}kov{\'{a}}, J. Brown, W. Dalton, R Gatenby, Optimizing cancer treatment using game theory: A review. JAMA oncology, 5(1)(2019)96-103.
\bibitem{WuY2019}Y. Wu, et al., Cooperative Game Theory-Based Optimal Angular Momentum Management of Hybrid Attitude Control Actuator, IEEE Access 7(2019) 6853-6865.

\bibitem{MengS}S. Meng, et al., Hierarchical evolutionary game based dynamic cloudlet selection and bandwidth allocation for mobile cloud computing environment. IET Commu. 13(1)(2019)16-25.
\bibitem{puuT1}T. Puu, Attractors, Bifurcations, and Chaos： Nonlinear Phenomena in Economics,  Springer-Verlag Berlin Heidelberg New York, 2000.
\bibitem{Kopel1996}M. Kopel, Simple and complex adjustment dynamics in Cournot duopoly models, Chaos Soliton. Fract. 12(1996)2031-48.
\bibitem{Ghaziani2008}W. Govaerts and G. Ghaziani, Stable cycles in a Cournot duopoly model of Kopel, J Comput. Appl. Math. 218(2008.)247-258
\bibitem{Bischi1999}G. Bischi, A. Naimzada, Global analysis of a dynamic duopoly game with bounded rationality, in: Advances in Dynamic Games and Applications, vol.5, Birkhaur, Boston, 1999.
\bibitem{Cavalli2015}F.  Cavalli, A. Naimzada, F. Tramontana, Nonlinear dynamics and global analysis of a heterogeneous Cournot duopoly with a local monopolistic approach versus a gradient rule with endogenous reactivity. Commun. Nonlinear Sci. Numer. Simul. 23(1-3)(2015)245-262.
\bibitem{Agliari2016}A. Agliari, A. Naimzada,N. Pecora, Nonlinear dynamics of a Cournot duopoly game with differentiated products. Appl. Math. Comput. 281(2016)1-15.
\bibitem{Agiza2003}H. Agiza, A. Elsadany, Nonlinear dynamics in the Cournot duopoly game with heterogeneous players, Physica A. 320(2003)512-524.
\bibitem{Agiza2004}H. Agiza, A. Elsadany, Chaotic dynamics in nonlinear duopoly game with heterogeneous players, Appl. Math. Comput. 149(2004)843-860.
\bibitem{Elsadany2016}A. Elsadany, A. Awad, Dynamical analysis and chaos control in a heterogeneous Kopel duopoly game. Indian J. Pure Appl. Math. 47(4)(2016)617-639.

\bibitem{FanX2012}Y. Fan, T. Xie, J. Du, Complex dynamics of duopoly game with heterogeneous players: a further analysis of the output model. Appl. Math. Comput. 218(15)(2012)7829-7838.
\bibitem{LiX2016}T. Li, J. Xie, S. Lu, J. Tang, Duopoly game of callable products in airline revenue management. Eur. J. Oper. Res. 254(3)(2016)925-934.
\bibitem{ElsadanyE2015}A. El-Sayed, A. Elsadany, A. Awad, Chaotic dynamics and synchronization of cournot duopoly game with a logarithmic demand function, Appl. Math. Inform. Sci. 9(6)(2015)3083.
\bibitem{ElsadanyA201601}A. Awad, A. Elsadany, Nonlinear dynamics of cournot duopoly game with social welfare, Electron. J. Math. Anal. Appl. 4(2)(2016)173-191.
\bibitem{Elsadany2017}A. Elsadany, Dynamics of a Cournot duopoly game with bounded rationality based on relative profit maximization. Appl. Math. Comput. 294(2017)253-263.
\bibitem{Matsumoto2006} A. Matsumoto, Controlling the cournot-nash chaos. J. optimiz. theor. app. 128(2)(2006)379-392.
\bibitem{Fanti2012}L. Fanti, L. Gori, The dynamics of a differentiated duopoly with quantity competition. Econ. Model. 29(2)(2012) 421-427.
\bibitem{XinMa2008}B.Xin, J. Ma, Q. Gao, Complex dynamics of an adnascent-type game model. Discrete Dynamics in Nature and Society, 2008(2008)467972.
\bibitem{MaG2014}J. Ma, Z. Guo, The parameter basin and complex of dynamic game with estimation and two-stage consideration, Appl. Math. Comput. 248(2014)131-142.
\bibitem{AskarA2016} S. Askar, A. Alshamrani, K. Alnowibet, The arising of cooperation in Cournot duopoly games, Appl. Math. Comput. 273(2016)535-542.

\bibitem{Yu2017}Y. Yu, Complexity analysis of taxi duopoly game with heterogeneous business operation modes and differentiated products. J. Intell. Fuzzy Syst. 33(5)(2017)3059-3067.
\bibitem{ZhangZ2018} Y. Zhang, W. Zhou, T. Chu, Y. Chu,  J Yu, Complex dynamics analysis for a two-stage Cournot duopoly game of semi-collusion in production. Nonlinear Dynam. 91(2)(2018)819-835.
\bibitem{Tramontana2009}F. Tramontana, L. Gardini, T. Puu, Cournot duopoly when the competitors operate multiple production plants, J. Econ. Dynam. Control 33(2009)250-265.
\bibitem{Tramontana2010}F. Tramontana, Heterogeneous duopoly with isoelastic demand function, Econ. Model. 27(2010)350-357.
\bibitem{Tramontana2015}F. Tramontana, A. Elsadany, B. Xin, H. Agiza, Local stability of the Cournot solution with increasing heterogeneous competitors. Nonlinear Analysis: Real World Applications, 26(2015)150-160.
\bibitem{BaiardiN2018} L. Baiardi, A. Naimzada, Imitative and best response behaviors in a nonlinear Cournotian setting. Chaos 28(5)(2018)055913.
\bibitem{BaiardiN201801} L. Baiardi, A. Naimzada, An oligopoly model with best response and imitation rules, Appl. Math. Comput. 336(2018)193-205.
\bibitem{BaiardiN201802} L. Baiardi, A. Naimzada, An oligopoly model with rational and imitation rules. Math. Comput. Simul. 156(2019)254-278.
\bibitem{Diaz1974}J. D{\'{\i}}az, T. Osler. "Differences of fractional order." Math. Comput. 28.125 (1974): 185-202.
\bibitem{MillerR1}K. Miller, B. Ross, Fractional difference calculus, in: Proceedings of the International Symposium on Univalent Functions, Fractional Calculus and Their Applications, Nihon University, Koriyama, Japan, May 1988, in: Ellis Horwood Ser. Math. Appl., Horwood, Chichester, 1989, pp.139–152.

\bibitem{Gray1988} H. Gray, N. Zhang, On a new definition of the fractional difference. Math. Comput. 50(182)(1988)513-529.
\bibitem{AticiF20070}F. Atici, P. Eloe, Initial value problems in discrete fractional calculus. Proc. Am. Math. Soc. 137(2007)981–9
\bibitem{AticiF2007}F. Atici, P. Eloe, A transform method in discrete fractional calculus, Int. J. Differ. Equ. 2(2)(2007)165-176.
\bibitem{AticiE2009}F. Atici, P. Eloe, Discrete fractional calculus with the nabla operator, Electron. J. Qual. Theo. Differ. Equ. 3(2009)1-12.
\bibitem{AticiE200901}F. Atici, P. Eloe, Discrete fractional calculus with the nabla operator, Electron. J. Qual. Theo. Differ. Equ.(3)(2009)1-12.
\bibitem{AticiS2010}F. Atici, S. {\c{S}}eng{\"{u}}l, Modeling with fractional difference equations, J. Math. Anal. Appl. 369(1)(2010)1-9.
\bibitem{Torres2011}N. Bastos, R. Ferreira, D. Torres, Discrete-time fractional variational problems. Signal Process, 91(2011)513–24.
\bibitem{WuDH2018}G.C. Wu, D. Baleanu, L. Huang, Novel Mittag-Leffler stability of linear fractional delay difference equations with impulse,
Appl. Math. Lett. 82(2018)71-78.
\bibitem{WuDZ2018}G.C. Wu, D. Baleanu, S. Zeng, Finite-time stability of discrete fractional delay systems: Gronwall inequality and stability criterion, Commun. Nonlinear Sci. Numer. Simul. 57(2018)299-308.
\bibitem{Goodrich2016}C. Goodrich, A. Peterson, Disrete fractional calculus, Springer, 2016.

\bibitem{Goodrich2018}C. Goodrich, Monotonicity and non-monotonicity results for sequential fractional delta differences of mixed order, Positivity 22(2)(2018)551-573.
\bibitem{Abdeljawad2012}T. Abdeljawad, F. Atici, On the definitions of nabla fractional operators, Abstr, Appl, Anal, 2012(2012)406757.
\bibitem{Abdeljawad2011}T. Abdeljawad, On Riemann and Caputo fractional differences. Comput. Math. Appl. 62(3)(2011)1602-1611.
\bibitem{Abdeljawad2016}T. Abdeljawad, D. Baleanu, Discrete fractional differences with nonsingular discrete Mittag-Leffler kernels, Adv. Differ. Equ.(1)(2016)232.
\bibitem{Abdeljawad2018}I. Suwan, T. Abdeljawad, F. Jarad, Monotonicity analysis for nabla h-discrete fractional Atangana–Baleanu differences. Chaos Solitons Fract. 117(2018)50-59.
\bibitem{ChenL2011}F. Chen, X. Luo,Y. Zhou, Existence results for nonlinear fractional difference equation. Adv. Differ. Equ.713201(2011)1-12.
\bibitem{Weiwang2011}Y. Wei, Y. Chen, T. Liu, Y. Wang, Lyapunov functions for nabla discrete fractional order systems, ISA T. (2018)(In Press).
\bibitem{Weiwang2018}Y. Wei, Q. Gao, D. Liu, Y. Wang, On the series representation of nabla discrete fractional calculus, Commun. Nonlinear Sci. Numer. Simul. 69(2019)198-218.
\bibitem{Cermak2015}J. {\v{C}}erm{\'{a}}k, I. Gy{\H{ő}}ri, L. Nechv{\'{a}}tal, On explicit stability conditions for a linear fractional difference system, Fract. Calc. Appl. Anal. 18(3)(2015)651-672.
\bibitem{Mozyrska20171} D. Mozyrska, M. Wyrwas,  Explicit criteria for stability of fractional h-difference two-dimensional systems. Int. J. Dynam. Control 5(1)(2017)4-9.

\bibitem{Mozyrska20172} D. Mozyrska, M. Wyrwas, Stability by linear approximation and the relation between the stability of difference and differential fractional systems. Math. Method Appl. Sci. 40(11)(2017)4080-4091.
\bibitem{Abu-Saris2013}R. Abu-Saris, Q. Al-Mdallal, On the asymptotic stability of linear system of fractional-order difference equations. Fract. Calc. Appl. Anal. 16(3)(2013)613-629.
\bibitem{TarasovV2015} V. Tarasov, Fractional-order difference equations for physical lattices and some applications. J. Math. Phys. 56(10) (2015)103506.
\bibitem{WuD2015}G.C. Wu, D. Baleanu, Z. Deng, S. Zeng, Lattice fractional diffusion equation in terms of a Riesz-Caputo difference. Phys. A 438(2015)335-339.
\bibitem{HuangLL2016}Huang et al. A new application of the fractional logistic map, Rom. J. Phys. 61(7-8)(2016)1172-1179.
\bibitem{IsmailS2017}S. Ismail et al., Generalized fractional logistic map encryption system based on FPGA, AEU-Int. J. Electron. Commun. 80(2017)114-126.
\bibitem{LiuXW2018}Z. Liu, T. Xia, J. Wang, Image encryption technique based on new two-dimensional fractional-order discrete chaotic map and Menezes-Vanstone elliptic curve cryptosystem, Chinese Phys. B. 27(3)(2018)030502.
\bibitem{LiuXW201801}Z. Liu, T. Xia, J. Wang, Fractional two-dimensional discrete chaotic map and its applications to the information security with elliptic-curve public key cryptography, J. Vib. Control, 24(20)(2018)4797-4824.
\bibitem{Kassim2017} S. Kassim, H. Hamiche, S. Djennoune, M. Bettayeb, A novel secure image transmission scheme based on synchronization of fractional-order discrete-time hyperchaotic systems. Nonlinear Dynamics, 88(4)(2017)2473-2489.
\bibitem{Xin2017}B. Xin, L. Liu, G. Hou, Y. Ma, Chaos Synchronization of Nonlinear Fractional Discrete Dynamical Systems via Linear Control, Entropy 19(2017) 351.

\bibitem{WuDC2016}G.C. Wu et al. Chaos synchronization of fractional chaotic maps based on the stability condition, Physica A  460(2016)374-383.
\bibitem{ShuklaS2017}M. Shukla, B. Sharma, Investigation of chaos in fractional order generalized hyperchaotic Henon map, AEU-Int. J. Electron. Commun. 78(2017)265-273.
\bibitem{OuannasA2018}A. Ouannas et al., Fractional Form of a Chaotic Map without Fixed Points: Chaos, Entropy and Control, Entropy 20(10)(2018)720.
\bibitem{OuannasA201801}A. Ouannas, et al., The fractional form of the tinkerbell map is chaotic. Appl. Sci. 8(2018)2640.
\bibitem{LiuYC2016}Y. Liu, Chaotic synchronization between linearly coupled discrete fractional Henon maps, Indian J. Phys. 90(3)(2016)313-317.
\bibitem{MozyrskaP2010}D. Mozyrska, E. Pawluszewicz, Observability of linear q-difference fractional-order systems with finite initial memory, B. Pol. Acad. Sci. Tech. 58(4) (2010) 601-605.
\bibitem{Mozyrska2014}D. Mozyrska, Multiparameter fractional difference linear control systems. Discrete Dyn. Nat. Soc. 2014(2014)183782.
\bibitem{MozyrskaW2015}D. Mozyrska, M. Wyrwas, The Z-Transform Method and Delta Type Fractional Difference Operators. Discrete Dyn. Nat. Soc. 2015(2015) 852734.
\bibitem{MozyrskaP2015}D. Mozyrska, E. Pawluszewicz, Controllability of h-difference linear control systems with two fractional orders, Int. J. Sys. Sci. 46(4)(2015)662-669.
\bibitem{Pawluszewicz2015} E. Pawluszewicz, Constrained controllability of the h-difference fractional control systems with caputo type operator, Discrete Dyn. Nat. Soc. 2015(2015) 638420.

\bibitem{Sierociuk2014} D. Sierociuk, M. Twardy, Duality of variable fractional order difference operators and its application in identification. B. Pol. Acad. Sci. Tech. 62(4) (2014)809-815.
\bibitem{YinZ2015} X. Yin, S. Zhou, Image structure-preserving denoising based on difference curvature driven fractional nonlinear diffusion, Math. Probl. Eng. 2015(2015) 930984.
\bibitem{LiuT2019}T. Liu, et al. Fractional central difference Kalman filter with unknown prior information, Signal Process. 154(2019)294-303.
\bibitem{LiMa2013}C. Li, Y. Ma, Fractional dynamical system and its linearization theorem. Nonlinear Dyn. 71(4) (2013), 621–633.
\bibitem{WuLyapu2017}G.C. Wu, D. Baleanu, W. Luo, Lyapunov functions for Riemann–Liouville-like fractional difference equations. Appl. Math.Comput. 314(2017)228-236.
\bibitem{WuFT2018}G.C. Wu, D. Baleanu, S. Zeng, Finite-time stability of discrete fractional delay systems: Gronwall inequality and stability criterion. Commun. Nonlinear Sci. Numer. Simul.  57(2018)299-308.
\bibitem{Khennaoui2019}A. Khennaoui, On fractional–order discrete–time systems: Chaos, stabilization and synchronization. Chaos Soliton Fract. 119(2019)150-162.
\bibitem{Ahrendt2012} K. Ahrendt, L. Castle, M. Holm, K. Yochman, Laplace transforms for the nabla-difference operator and a fractional variation of parameters formula, Commun. Appl. Anal. 16 (2012)317–347.
\bibitem{Chenfl2011}F. Chen, X. Luo, Y. Zhou, Existence Results for Nonlinear Fractional Difference Equation, Adv. Differ. Equ. 713201(2011)1-12.
\bibitem{Gottwald2009}G.C. Gottwald and I. Melbourne, On the validity of the 0-1 test for chaos,  Nonlinearity, 22(2009)1367-1382.

\bibitem{Gottwald2008}G.C. Gottwald and I. Melbourne, Comment on "Reliability of the 0-1 test for chaos", Phys. Rev. E 77(2)(2008)028201.
\bibitem{Gottwald2007}I. Falconer, G. Gottwald, I. Melbourne and K. Wormnes, Application of the 0-1 test for chaos to experimental data,  SIAM J. Appl. Dyn. Syst. 6(2007)95-402.
\bibitem{BazineM2019}H. Bazine, M. Mabrouki, Chaotic dynamics applied in time prediction of photovoltaic production, Renew. Energ. 136(2019)1255-1265.
\bibitem{ZhuQi2018}H. Zhu, W. Qi, J. Ge, Y. Liu, Analyzing Devaney Chaos of a Sine-Cosine Compound Function System, Int. J. Bifurcat. Chaos 28(14)(2018)1850176.
\bibitem{RanLi2018}J. Ran, Y. Li, C. Wang, Chaos and Complexity Analysis of a Discrete Permanent-Magnet Synchronous Motor System, Complexity 2018(2018)7961214.
\bibitem{Yuan2019}L. Yuan, A Song, A. Zeeshan, Dynamics analysis and cryptographic application of fractional logistic map, Nonlinear Dynam. (2019)1-22.
\bibitem{Tiwari2019}A. Tiwari, G. Rangan, Chaos in G7 stock markets using over one century of data: A note, Res. Int. Bus. Fin. 47 (2019): 304-310.
\bibitem{Martinovic2018}T. Martinovic, Chaotic behaviour of noisy traffic data, Math, Meth, Appl, Sci, 41(6)(2018)2287-2293.
\bibitem{Halfar2018}R. Halfar, L. Marek, Dynamical properties of the improved FK3V heart cell model, Math. Meth. Appl. Sci. 41(17)(2018)7472-7480.
\bibitem{Vaidyanathan2018}S. Vaidyanathan, A. Akgul, S. Kaçar, U. Cavusoglu, A new 4-D chaotic hyperjerk system, its synchronization, circuit design and applications in RNG, image encryption and chaos-based steganography, Eur. Phys. J. Plus 133(2)(2018)46.


\end{thebibliography}
\end{document}